
\documentclass[9pt]{article}
\usepackage{spconf,amsmath,graphicx}
\usepackage{epstopdf}
\usepackage{slashed}
\usepackage{bm}
\usepackage{mdwmath}
\usepackage{amsthm}
\usepackage{url}
\usepackage{amssymb}
\usepackage{enumerate}
\usepackage{algorithm} 
\usepackage{algorithmic} 
\usepackage{subfigure}
\usepackage{appendix}
\usepackage{cite}
\usepackage{makecell,rotating,multirow,diagbox}
\usepackage{color}

\newtheorem{proposition}{\textbf{Proposition}}

\graphicspath{{figures/}}

\def\y{{\mathbf y}}
\def\0{{\mathbf 0}}

\def\p{{\mathbf p}}
\def\f{{\mathbf f}}
\def\g{{\mathbf g}}
\def\r{{\mathbf r}}
\def\n{{\mathbf n}}

\def\I{{\mathbf I}}

\def\A{{\mathbf A}}

\def\Q{{\mathbf Q}}
\def\U{{\mathbf U}}
\def\C{{\mathbf C}}

\def\Tr{{\textrm {Tr}}}

\def\bphi{{\pmb {\phi}}}

\def\cS{{\mathcal S}}

\def\cO{{\mathcal O}}

\def\bbR{{\mathbb R}}

\def\bphi{{\pmb {\phi}}}

\def\cS{{\mathcal S}}

\def\cO{{\mathcal O}}

\def\bphi{\pmb {\Phi}}
\def\bsphi{\pmb {\phi}}
\def\bdelta{\pmb {\delta}}

\def\bsigma{\pmb{\Sigma}}
\def\U{\mathbf{U}}
\def\V{\mathbf{V}}
\def\Tr{{\textrm {Tr}}}
\def\argmax{{\textrm {argmax}}}

\title{Fast sensor placement by enlarging principle submatrix for large-scale linear inverse problems}
%
\name{{Fen Wang$^{\dag}$, Gene Cheung$^{\ddag}$, Taihao Li$^{\dag,*}$, Ying Du$^{\dag}$, Yu-Ping Ruan$^\dag$}
\thanks{$^*$Corresponding author}
}
\address{$^{\dag}$Department of Artificial Intelligence,  Zhejiang Lab, Hangzhou, China\\
$^{\ddag}$Department of Electrical Engineering \& Computer Science, York University, Canada
}

\begin{document}
%
\maketitle

\begin{abstract}
Sensor placement for linear inverse problems is the selection of locations to assign sensors so that the entire physical signal can be well recovered from partial observations.  
In this paper, we propose a fast sampling algorithm to place sensors. 
Specifically, assuming that the field signal $\f$ is represented by a linear model $\f=\bphi{\g}$, it can be estimated from partial noisy samples via an unbiased least-squares (LS) method, whose expected mean square error (MSE) depends on chosen samples.
First, we formulate an approximate MSE problem, and then prove it is equivalent to a problem {related to a principle submatrix of $\bphi\bphi^\top$ indexed by sample set. }
To solve the formulated problem, we devise a fast greedy algorithm with simple matrix-vector multiplications, leveraging a matrix inverse formula.
To further reduce complexity, we reuse results in the previous greedy step for warm start, so that candidates can be evaluated via lightweight vector-vector multiplications.
Extensive experiments show that our proposed sensor placement method achieved the lowest sensor sampling time and the best performance compared to state-of-the-art schemes.

\begin{keywords}
Sensor placement, linear inverse problem, greedy sampling
\end{keywords} 


\end{abstract}

\section{Introduction}
\label{sec:introd}
Sensor placement is the problem choosing a budget of locations for sensors to monitor a physical field, such as temperature, humidity and transportation congestion \cite{weak-submodularity2018,Humidity,traffic}.
Due to limited sensing resources and expensive acquisition cost, sampling techniques are widely studied to evaluate possible locations to place sensors for collecting signals, so that the unobserved signal can be recovered in high accuracy \cite{ortiz2019sparse,anis16,fen2020,Chamon2018}.

Assuming the physical signal is represented by a linear model $\f=\bphi\g$ \cite{linear}, it can be  recovered from partial noisy samples by solving a \textit{linear inverse problem} using least square (LS) method, whose \textit{mean square error} (MSE) is a function of measurement matrix $\bphi$ and sampling set \cite{Statistical-signal}.  
Recently, authors in \cite{joshi2008sensor} formulated the sensor placement problem as a nonconvex
optimization problem and then relaxed it into a convex one, so that it can be solved by interior point methods with polynomial complexity. 
However, this relaxation scheme performed poorly when the sensor budget was very small. 
To promote better MSE performance,
greedy algorithms were devised to design sensor locations one-by-one via solving different local optimization problems
\cite{linear-inverse,jiang2016sensor,jiang2019group}. 
\cite{linear-inverse} presented one  near-optimal sensor placement
algorithm for the frame potential-based cost function, called FrameSense, which preserved the sub-modularity property with theoretically bounded performance. 
\cite{jiang2016sensor} used the worst-case of MSE function---E-optimality criterion \cite{experiments}---as objective to deploy samples, and developed
corresponding efficient algorithms. 
However, in each greedy search, it required computing eigenspace of selected sensors, which is
expensive when the dimension of eigenspace is large.
Recently, \cite{jiang2019group} proposed a fast MSE pursuit algorithm to greedily solve an approximate MSE criterion, but it still suffered from multiple matrix inverse computations, which is not practical for large-scale linear inverse problems. 

{Towards fast sensor placement, in this paper we propose a greedy approach with simple vector-vector multiplications. 
Specifically, first we formulate an approximate MSE problem using a small shift, and then prove it is equivalent to a problem related to a principle submatrix of $\bphi\bphi^\top$ indexed by samples.
To solve the formulated problem, we devise a fast greedy algorithm without matrix inverse computation based on a matrix inverse lemma \cite{blockwise}.
Using computed results in the previous greedy step for warm start, we design an accelerated strategy to evaluate each candidate via simple vector-vector multiplications.
Extensive experiments show that among popular sensor placement methods, our proposed scheme achieved the lowest sensor sampling time with the best sampling performance.}
\section{Preliminaries}
In sensor placement field,  the high-dimensional physical signal $\f\in\bbR^{N}$ is conventionally modelled by a linear equation \cite{Sensor-placement}: 
\begin{equation}
    \f=\bphi\g,
\end{equation}
where $\g\in\bbR^K$ is the parameter vector where $K\ll N$. 

Instead of placing $N$ sensors, given limited measuring resources, we select $M$ samples out of $N$ total choices, where $M \ll N$. 
We first define a sampling matrix $\C\in\{0,1\}^{M\times N}$ associated with sampling set $\cS\subset \{1,\dots,N\}$ as follows \cite{SCsampling}:
\begin{equation}\label{eq:sampling-matrix}
\begin{split}
  c_{ij}= \left\{ \begin{array}{ll}
1,&\mbox{if}\; j=\cS\{i\}; \\
0,&\mbox{otherwise}.
\end{array}\right.
\end{split}
\end{equation}
Thus, the observed noisy signal can be expressed as
\begin{equation}\label{eq:sampled-signal}
  \y=\C\f+\n=\C\bphi\g+\n\in\bbR^M,
\end{equation}
where $\n$ is an i.i.d noise signal with covariance matrix $\sigma^2\I$.

Using observed partial signal $\y$, we can estimate the original signal $\f$ by solving a {linear inverse problem}. 
Specifically, assuming the coefficient matrix $\C\bphi$ is full column rank{\footnote{One necessary but not sufficient condition of $\textrm{rank}(\C\bphi)=K$ is $M\ge K$ \cite{Anis2018TIT}.}}, the parameter vector can be estimated as \cite{Statistical-signal} 
\begin{equation}
     \hat{\g}=(\C\bphi)^\dag\y,
\end{equation}
where $\dag$ denotes the pseudo inverse computation, and $\hat{\g}$ is an unbiased estimator of $\g$ with minimum variance. 
Given $\hat{\g}$, the estimated target signal is $\hat{\f}=\bphi\hat{\g}$. 

Since $(\C\bphi)^\dag=\left[(\C\bphi)^\top(\C\bphi)\right]^{-1}(\C\bphi)^\top$, the expected MSE of this least square (LS) solution is \cite{wang2019low}
\begin{equation}\label{eq:mse}
  \textrm{MSE}(\hat{\g})=\mathbb{E}\left(\|\hat{\g}-\g\|^2_2\right)
  =\sigma^2\Tr\left[(\C\bphi)^\top(\C\bphi)\right]^{-1} .
\end{equation}

Hence, a MSE-based sampling problem for placing sensors is to select $M$ samples to minimize the expected MSE\footnote{A-optimality formulation minimizes the expected MSE directly, while other cost functions, like E-optimality or D-optimality criterion, minimize proxies of MSE \cite{experiments}.}.
This is also called \textit{A-optimality} in experimental design \cite{experiments}:
\begin{equation}\label{eq:mse-sampling}
  \min_{\C}~~\Tr\left[(\C\bphi)^\top(\C\bphi)\right]^{-1}=\sum_{k=1}^{K}\frac{1}{\lambda_k}
\end{equation}
where $\lambda_K\ge \dots \ge \lambda_2\ge \lambda_1$ are the eigenvalues of matrix $(\C\bphi)^\top(\C\bphi)$ \cite{wang2019low}.
{Since $\C\bphi$ has full-column rank and $(\C\bphi)^\top(\C\bphi)$ is positive semi-definite (PSD) by definition, $(\C\bphi)^\top(\C\bphi)$ has the property that $\lambda_k> 0,\forall k$.}

\section{Fast Sampling Strategy to minimize an approximate MSE problem}

{In this section, we at first propose an augmented A-optimality criterion as our sampling objective.
Then, we propose to mitigate the large matrix inverse in each greedy step based on matrix inverse formula.
For fast sampling, we propose one strategy to reduce the computation burden based on warm starts.
At last, we will analyze the complexity of our proposed sampling algorithm.

\subsection{{Modified A-optimality Criterion}}
\label{subsec:modifiedA}


{First, we propose a modified A-optimality sampling criterion that closely approximates the original problem \eqref{eq:mse-sampling} by adding a small constant shift:}
\begin{equation}\label{eq:mse-shift}
  \min_{\C}~~\Tr\left[(\C\bphi)^\top(\C\bphi)+\mu\I\right]^{-1}=\sum_{k=1}^{K}\frac{1}{\lambda_k+\mu}
\end{equation}
where $\mu>0$ is a small shift parameter.
{This shifted sampling objective was also adopted in sensor placement \cite{jiang2019group} and graph sampling \cite{wang2019low}.
We will present a faster algorithm minimizing this objective than ones in \cite{jiang2019group} and \cite{wang2019low} in this section.}


{It can be proven that the modified optimization \eqref{eq:mse-shift} has the same optimal solution(s) if $(\bphi\bphi^\top+\mu\I)_{\cS}$ replaces $( \C \bphi )^{\top} (\C \bphi)+\mu\I$ in \eqref{eq:mse-shift}.
We formally state this in the following proposition:}

\begin{proposition}
\label{prop:submatrix}
Denote the eigenvalues of matrix $(\C\bphi)^\top(\C\bphi)$ by  $\lambda_1\le \lambda_2\le \dots \le \lambda_K$, and assume $\textrm{rank}(\C\bphi)=K$.
When $|\cS|=M\ge K$, an optimal sampling set for problem  \eqref{eq:mse-shift} is also optimal to the following optimization:
  \begin{equation}\label{eq:sampling-submatrix}
\min_{\cS}~~\Tr\left[(\bphi\bphi^\top+\mu\I)_{\cS}\right]^{-1}
  \end{equation}
where $\A_{\cS}$ is a principle submatrix of $\A$ with row and column indexed by set $\cS$ and the relationship between $\C$ and $\cS$ is defined in equation \eqref{eq:sampling-matrix}.
\end{proposition}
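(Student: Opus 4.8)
The plan is to recognize that both matrices appearing in \eqref{eq:mse-shift} and \eqref{eq:sampling-submatrix} are Gram matrices formed from the selected rows of $\bphi$, and that one is $K\times K$ while the other is $M\times M$ but they share the same nonzero spectrum. Concretely, write $\bphi_\cS := \C\bphi \in \bbR^{M\times K}$ for the submatrix of $\bphi$ whose rows are indexed by $\cS$ (this identity follows directly from the definition of $\C$ in \eqref{eq:sampling-matrix}). A one-line index computation then gives $(\bphi\bphi^\top)_\cS = \bphi_\cS\bphi_\cS^\top$ and $(\mu\I)_\cS = \mu\I_M$, so the objective matrix in \eqref{eq:sampling-submatrix} is exactly $\bphi_\cS\bphi_\cS^\top + \mu\I_M$, whereas the one in \eqref{eq:mse-shift} is $(\C\bphi)^\top(\C\bphi)+\mu\I = \bphi_\cS^\top\bphi_\cS + \mu\I_K$.

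Next I would compare the two spectra. Under the hypothesis $\textrm{rank}(\C\bphi)=K$ (which forces $M\ge K$), the $K\times K$ matrix $\bphi_\cS^\top\bphi_\cS$ has eigenvalues $\lambda_1,\dots,\lambda_K>0$, while the $M\times M$ matrix $\bphi_\cS\bphi_\cS^\top$ has those same nonzero eigenvalues together with the eigenvalue $0$ of multiplicity $M-K$ — this is the standard fact that $A^\top A$ and $AA^\top$ share nonzero eigenvalues with multiplicities (seen e.g. from the SVD of $\bphi_\cS$, or by mapping a unit eigenvector $\u$ of $\bphi_\cS^\top\bphi_\cS$ with eigenvalue $\lambda$ to $\bphi_\cS\u/\sqrt{\lambda}$ and completing with a dimension count). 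Shifting all eigenvalues by $\mu$ and summing reciprocals then yields
\begin{align*}
  \Tr\left[(\bphi\bphi^\top+\mu\I)_\cS\right]^{-1}
  &= \sum_{k=1}^{K}\frac{1}{\lambda_k+\mu} + \frac{M-K}{\mu} \\
  &= \Tr\left[(\C\bphi)^\top(\C\bphi)+\mu\I\right]^{-1} + \frac{M-K}{\mu}.
\end{align*}

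Finally, since the sensor budget $M$, the model dimension $K$, and the shift $\mu$ are all fixed, the term $(M-K)/\mu$ is a constant independent of the chosen set $\cS$. Hence the objectives of \eqref{eq:mse-shift} and \eqref{eq:sampling-submatrix} differ by this constant over the common feasible set $\{\cS : |\cS|=M\}$, so their minimizers coincide, which is the claim. The only real content lies in the second step — the eigenvalue bookkeeping, in particular pinning down exactly $M-K$ extra zero eigenvalues of $\bphi_\cS\bphi_\cS^\top$, which is where full column rank is used; the rest is routine rewriting of traces through spectra. (I would also note in passing that the displayed identity persists without the rank assumption, with $K$ replaced by $\textrm{rank}(\bphi_\cS)$, since the discrepancy is still the $\cS$-independent quantity $(M-K)/\mu$.)
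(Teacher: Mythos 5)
Your proof is correct and follows essentially the same route as the paper: identify $(\bphi\bphi^\top)_{\cS}$ with $(\C\bphi)(\C\bphi)^\top$, use the shared nonzero spectrum of $A^\top A$ and $AA^\top$ (via the SVD) to get the extra $M-K$ eigenvalues equal to $\mu$ after shifting, and conclude that the two objectives differ by the $\cS$-independent constant $(M-K)/\mu$. The only cosmetic difference is that the paper treats $M=K$ and $M>K$ as separate cases while you handle them uniformly, and your closing observation that the identity survives without the full-rank assumption is a correct (if unneeded) strengthening.
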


\begin{proof}
{{
When $M=K$, matrix $\C\bphi$ is a square matrix, so the eigenvalues of $(\C\bphi)^{\top} (\C\bphi)$ and $(\C\bphi)(\C\bphi)^{\top}=(\bphi\bphi^\top)_{\cS}$ are the same. Therefore, problem \eqref{eq:mse-shift} and \eqref{eq:sampling-submatrix} have equivalent  optimal solution;
When $M>K$, assume first that the singular value decomposition (SVD) of matrix $\C\bphi\in\bbR^{M\times K}$ is $\C\bphi=\U\bsigma\V^\top$ where $\U\in\bbR^{M\times M}$ and $\V\in\bbR^{K\times K}$ are orthogonal left and right eigenvectors respectively. The singular value  matrix $\bsigma\in\bbR^{M\times K}$ has the form:
  \begin{equation}\label{eq:sigma}
    \bsigma=\left[\begin{array}{ccc}
                    \sigma_1 & ~ & ~ \\
                    ~ & \ddots & ~ \\
                    ~ & ~ & \sigma_K \\
                    ~ & \0 & ~
                  \end{array}\right]
  \end{equation}
where $\sigma_k\neq 0, \forall k$, since $\textrm{rank}(\C\bphi)=K$ and $M>K$.

Then, the eigenvalues of $(\C\bphi)^{\top} (\C\bphi)=\V\bsigma^\top\bsigma\V^\top$ are $\{\sigma^2_1,\cdots,\sigma^2_K\}$, where $\lambda_k=\sigma^2_k>0,\forall k$.
And, the eigenvalues of matrix $(\bphi\bphi^\top)_{\cS}=(\C\bphi)^{\top} (\C\bphi)=\U\bsigma\bsigma^\top\U^\top$ are $\{\underbrace{0,\dots,0}_{M-K},\lambda_1,\dots,\lambda_K\}$}}, which indicates the eigenvalues of $(\bphi\bphi^\top)_{\cS}+\mu\I$ are
  \begin{equation}\label{eq:submatrix-eigenvalue}
    \{\underbrace{\mu,\dots,\mu}_{M-K},\lambda_1+\mu,\dots,\lambda_K+\mu\}
  \end{equation}

Therefore,
\begin{equation}
  \Tr\left[(\bphi\bphi^\top)_{\cS}+\mu\I\right]^{-1}=\frac{M-K}{\mu}+\sum_{k=1}^{K}\frac{1}{\lambda_k+\mu}
\end{equation}
where $\lambda_k+\mu > 0$ since $\mu$ is strictly positive and $\lambda_k>0$.

Combined with equation \eqref{eq:mse-shift}, we see that
\begin{equation}
  \Tr\left[(\bphi\bphi^\top)_{\cS}+\mu\I\right]^{-1}=\frac{M-K}{\mu}+\Tr\left[(\C\bphi)^\top(\C\bphi)+\mu\I\right]^{-1}\nonumber
\end{equation}
whose left part can be rewritten as equation \eqref{eq:sampling-submatrix}.

Notice that i) $M$, $K$ and $\mu$ are fixed constants not affected by optimization, and ii) optimization variables $\C$ in \eqref{eq:mse-shift} or $\cS$ in \eqref{eq:sampling-submatrix} have the same degrees of freedom.
We can thus conclude that optimal solutions to \eqref{eq:mse-shift} and to \eqref{eq:sampling-submatrix} are the same.
\end{proof}
{\textbf{Remark:} This formulation was also derived using Neumann series theorem \cite{neumannseries} for sampling of bandlimited graph signals, where $\bphi$ is formed using the first $K$ orthogonal eigenvectors of a combinatorial graph Laplacian matrix in \cite{wang2019low}.
In contrast, in this paper $\bphi$ is a general measurement matrix for modelling physical filed, which in general does not have the orthogonal property.
Thus, we derive the more general result differently here.  }

\subsection{Greedy Reformulation without Matrix Inversion}
\label{subsec:greedy}
Given that the sampling problem \eqref{eq:sampling-submatrix} is combinatorial in nature, we employ a greedy approach to optimally choose one sample at a time.
Specifically, assuming that we have obtained set $\cS_t$ after $t$ iterations, to decide the $(t+1)$-th sample, we solve the following local optimization problem:
\begin{equation}
\label{eq:submatrix-greedy}
\min_{i\in\cS^c_t}~~\underbrace{\Tr\left[(\bphi\bphi^\top+\mu\I)_{\cS_t\cup\{
i\}}\right]^{-1}}_{f(\cS_t\cup\{
i\})}
\end{equation}
where $\cS_0=\emptyset$.

If we compute the objective directly to evaluate each candidate solution $i\in\cS^c_t$, we have to perform matrix inversion, with complexity at most $\cO (M^3)$.
To mitigate large matrix inversion, we introduce the next greedy strategy based on one matrix inverse formula \cite{blockwise}.

For notation simplicity, we first define $\Q=\bphi\bphi^\top+\mu\I$.
Since matrix $\Q$ is symmetric, with appropriate permutation, its sub-matrix $\Q_{\mathcal{S}_t\cup\{i\}}$ can be expressed as
\begin{align}
\Q_{\mathcal{S}_t\cup\{i\}} = \left[ \begin{array}{cc}
 \mathbf{Q}_{\mathcal{S}_t}&{\mathbf{Q}}_{\mathcal{S}_t,\{i\}} \\
{\mathbf{Q}}_{\{i\},\mathcal{S}_t}&q_{ii} \\
 \end{array} \right]= \left[ \begin{array}{cc}
 \mathbf{Q}_{\mathcal{S}_t}&\mathbf{p}_{t,i} \\
~\mathbf{p}^{\top}_{t,i}&q_{ii} \\
 \end{array} \right],
\end{align}
where $\p_{t,i}={\mathbf{Q}}_{\mathcal{S}_t,\{i\}}\in\bbR^{t}$.

Matrix inverse $\Q^{-1}_{\cS_t\cup\{
i\}}$ in equation \eqref{eq:submatrix-greedy} can be obtained using $\Q^{-1}_{\cS_t}$ via the \textit{matrix inversion formula} \cite{blockwise}:
\begin{small}
\begin{align}\label{inversion lemma1}
\mathbf{Q}^{-1}_{\mathcal{S}_t\cup\{i\}} = \left[ \begin{array}{cc}
\mathbf{Q}^{-1}_{\mathcal{S}_t}+
h^{-1}_i\mathbf{Q}^{-1}_{\mathcal{S}_t}\mathbf{p}_{t,i}\mathbf{p}^{\top}_{t,i}\mathbf{Q}^{-1}_{\mathcal{S}_t}
&-h^{-1}_i\mathbf{Q}^{-1}_{\mathcal{S}_t}\mathbf{p}_{t,i} \\
-h^{-1}_i\mathbf{p}^{\top}_{t,i}\mathbf{Q}^{-1}_{\mathcal{S}_t}&h^{-1}_i \\
 \end{array} \right]
\end{align}
\end{small}
where $h_i=q_{ii}-\mathbf{p}^{\top}_{t,i}\mathbf{Q}^{-1}_{\mathcal{S}_t}\mathbf{p}_{t,i}$ is a scalar.

Therefore,
\begin{equation}\label{eq:cost}
\begin{split}
 f(\cS_t\cup\{i\})&=\Tr\left(\Q^{-1}_{\cS_t}\right)
+h^{-1}_i\Tr\left(\mathbf{Q}^{-1}_{\mathcal{S}_t}\p_{t,i}\mathbf{p}^{\top}_{t,i}\mathbf{Q}^{-1}_{\mathcal{S}_t}\right)+h^{-1}_i\\
&=f(\cS_t)+h^{-1}_i\|\mathbf{Q}^{-1}_{\mathcal{S}_t}\mathbf{p}_{t,i}\|^2_2+h^{-1}_i\nonumber
\end{split}
\end{equation}

Because $f(\cS_t)$ {is a constant not affected by the selection of candidate $i$}, during the $(t+1)$-th greedy step given input $\cS_t$, the sampling problem \eqref{eq:submatrix-greedy} can be simplified as:
\begin{equation}
\label{eq:greedy-mse-shift}
\begin{split}
  \min_{i\in\cS^c_t}~~~&h^{-1}_i\|\mathbf{Q}^{-1}_{\mathcal{S}_t}\mathbf{p}_{t,i}\|^2_2+h^{-1}_i\\
\text{s.t.}~~~&h_i=q_{ii}-\p^{\top}_{t,i}\Q^{-1}_{\mathcal{S}_t}\p_{t,i};~\p_{t,i}=\Q_{\mathcal{S}_t,\{i\}}
  \end{split}
\end{equation}
where $\Q^{-1}_{\cS_t}$ is {already computed during the previous iteration}.

Compared to solving problem \eqref{eq:submatrix-greedy}, optimizing problem \eqref{eq:greedy-mse-shift} needs to compute {matrix-vector product} $\Q^{-1}_{\mathcal{S}_t}\p_{t,i}$ with complexity $\cO(t^2)$, given known $\Q^{-1}_{\cS_t}$.

To optimize problem  \eqref{eq:greedy-mse-shift}, we can compute  $\Q=\bphi\bphi^\top+\mu\I$ once with complexity $\cO(KN^2)$ and then query its partial entries $q_{ii}$ and $\p_{t,i}$ for greedy evaluation.
Next, we will propose to compute the involved elements on the fly inside greedy search without first computing $\Q$ and further reduce the evalution complexity.

\subsection{{Evaluation Complexity Reduction}}
\label{subsec:Q}

We first write input matrix $\bphi=[\bsphi_1,\bsphi_2,\dots,\bsphi_N]^\top$, where $\bsphi^\top_i$ is the $i$-th row in matrix $\bphi$.
Then we can compute $q_{ii}$ as
\begin{equation}
\label{eq:qii}
  q_{ii}=\bdelta^\top_i\Q\bdelta_i=\bdelta^\top_i\bphi\bphi^\top\bdelta_i+\mu=\|\bsphi_i\|^2_2+\mu
\end{equation}

The value of $\p_{t,i}$ can be obtained via
\begin{equation}
\label{pti}
  \p_{t,i}=\Q_{\mathcal{S}_t,\{i\}}=\C_{(\cS_t)}(\bphi\bphi^\top+\mu\I)\bdelta_i=\C_{(\cS_t)}\bphi\bsphi_i+\mu\C_{(\cS_t)}\bdelta_i\nonumber
\end{equation}
where $\bdelta_i$ is the $i$-th column of identity matrix $\I$, and $\C_{(\cS_t)}$ is the sampling matrix corresponding to set $\cS_t$, defined in equation  \eqref{eq:sampling-matrix}.

Based on the definition of sampling matrix,
\begin{equation}\label{eq:prod}
  \C_{(\cS_t)}\bphi\bsphi_i=\bphi_{\cS_t,:}\bsphi_i=[\bsphi_{\cS_t\{1\}},\dots,\bsphi_{\cS_t\{t\}}]^\top\bsphi_i
\end{equation}
and
\begin{equation}\label{eq:c0}
  \mu\C_{(\cS_t)}\bdelta_i=\0
\end{equation}
since $i\in\cS^c_t$.


\begin{algorithm}[tp]
\caption{Fast MSE-based sampling (FMBS)}
\label{algo:FMBS}
\textbf{Input:} $\bphi=[\bsphi_1,\bsphi_2,\dots,\bsphi_N]^\top$, sample size $M$ and $\mu$\\
\textbf{Initialization:} $\cS=\emptyset$
\begin{algorithmic}[1]
\STATE Compute $q_{ii}=\|\bsphi_i\|^2_2+\mu,\forall i$
\STATE Select the first node by $i^* = \argmax_iq_{ii}$
\STATE Update $\cS \leftarrow \cS \cup \left\{ i^* \right\}$
\STATE\textbf{While} $\left| \mathcal{S} \right| < M$
 \STATE $\forall i \in \mathcal{S}^{c}$, compute \\
\textbf{If} $|\cS|=1$\\
 \qquad $\p_{i}=\bsphi^\top_{i^*}\bsphi_i$ and  $\r_i=\frac{1}{q_{i^*i^*}}\p_i$\\
 \textbf{else}\\
 \qquad $\alpha=\p^\top_i\r_{i^*}/h_{i^*}$ and $\beta=\bsphi^\top_{i^*}\bsphi_i/h_{i^*}$ \\
 \qquad $\r_i=\left[\begin{array}{c}
                     \r_i+\alpha\r_{i^*}-\beta\r_{i^*} \\
                     -\alpha+\beta
                     \end{array}\right]$\\
 \qquad $\p_i=[\p^\top_{i} ~~\beta h_{i^*}]^\top$\\
  \textbf{end If}\\
 $h_i=q_{ii}-\p^\top_i\r_i$
\STATE Select $i^* = \mathop {\arg \min }\limits_{i \in \mathcal{S}^{c}} h^{-1}_i\|\r_i\|^2_2+h^{-1}_i $\\
\STATE Update $\cS \leftarrow \cS \cup \left\{ i^* \right\}$
\STATE \textbf{end While}
\STATE Return $\mathcal{S}$
\end{algorithmic}
\end{algorithm}

Using \eqref{eq:qii} to \eqref{eq:c0}, we can simplify optimization \eqref{eq:greedy-mse-shift} to the following
\begin{equation}
\label{eq:finalequ}
\begin{split}
 \min_{i\in\cS^c_t}\quad&h^{-1}_i\|\mathbf{r}_{t,i}\|^2_2+h^{-1}_i\\
 \textit{s.t.}~~~& h_i=\|\bsphi_i\|^2_2+\mu-\p^{\top}_{t,i}\mathbf{r}_{t,i};\\
 & \r_{t,i}=\Q^{-1}_{\cS_t}\p_{t,i};\\
 & \p_{t,i}=[\bsphi_{\cS_t\{1\}},\dots,\bsphi_{\cS_t\{t\}}]^\top\bsphi_i
  \end{split}
\end{equation}
{where $h_{i}$ and $\p_{t,i}$ are computed from input $\bphi$ without first computing $\Q$ compared to equation \eqref{eq:greedy-mse-shift}.} However, this formulation still require matrix-vector multiplications for evaluating one candidate.

{Suppose that the optimal sample in step $t$ is $i^*$.
Then $\cS_{t}=\cS_{t-1}\cup\{i^*\}$.
For candidate node in unsampled set $i\in\cS^c_{t}$, we can write}
\begin{equation}\label{eq:pti}
  \p_{t,i}=[\bphi^\top_{\cS_{t-1},:} ~~\bsphi_{i^*}]^\top\bsphi_i=[\p^\top_{t-1,i} ~~\bsphi^\top_{i^*}\bsphi_i]^\top
\end{equation}

Thus, using computed result $\p_{t_1,i}$ in last greedy step as warm start, the computation of $\p_{t,i}$ only requires once vector-vector multiplication  $\bsphi^\top_{i^*}\bsphi_i$.
Further, for $i\in\cS^c_t$,
\begin{equation}
\label{eq:rti}
\begin{split}
 & \r_{t,i}=\Q^{-1}_{\cS_t}\p_{t,i}=\Q^{-1}_{\cS_{t-1}\cup \{i^*\}}\p_{t,i}\\
 & =\left[\begin{array}{cc}
              \Q^{-1}_{\cS_{t-1}}+h^{-1}_{i^*}\r_{t-1,i^*}\r^\top_{t-1,i^*}& -h^{-1}_{i^*} \r_{t-1,i^*}\\
             -h^{-1}_{i^*}\r^\top_{t-1,i^*} & h^{-1}_{i^*}
          \end{array}\right]
          \left[\begin{array}{c}
             \p_{t-1,i}  \\
             \bsphi^\top_{i^*}\bsphi_i
          \end{array}\right]\\
 &=\left[\begin{array}{c}
             \r_{t-1,i}+h^{-1}_{i^*}\r_{t-1,i^*}\r^\top_{t-1,i^*}\p_{t-1,i}-h^{-1}_{i^*} \bsphi^\top_{i^*}\bsphi_i\r_{t-1,i^*} \\
             -h^{-1}_{i^*}\r^\top_{t-1,i^*}\p_{t-1,i}+h^{-1}_{i^*}\bsphi^\top_{i^*}\bsphi_i
          \end{array}\right]\\
          &=\left[\begin{array}{c}
                    \r_{t-1,i}+\alpha\r_{t-1,i^*}-\beta\r_{t-1,i^*} \\
                    -\alpha+\beta
                  \end{array}\right]
\end{split}
\end{equation}
where
\begin{equation}
    \alpha=h^{-1}_{i^*}\p^\top_{t-1,i}\r_{t-1,i^*}
\end{equation}
\begin{equation}
    \beta=h^{-1}_{i^*} \bsphi^\top_{i^*}\bsphi_i
\end{equation}
and
\begin{equation}
\label{eq:hi}
   h_{i^*}=q_{i^*i^*}-\p^{\top}_{t-1,i^*}\r_{t-1,i^*}
\end{equation}

\textbf{Remark:} Based on equations \eqref{eq:rti} to \eqref{eq:hi}, for computing $\r_{t,i}$, we need  to reuse the computed warm starts in the last greedy step $h_{i^*}$, $\r_{t-1,i}$ and $\p_{t-1,i}$ and compute two new vector-vector multiplications, \textit{i.e.}, $\p^\top_{t-1,i}\r_{t-1,i^*}$ and  $\bsphi^\top_{i^*}\bsphi_i$.

Therefore, according to equations \eqref{eq:pti} and \eqref{eq:rti}, the computations of $\p_{t,i}$ and $\r_{t-i}$ only require four  vector-vector multiplications for evaluating one candidate. We write the complete greedy procedure in Algorithm \ref{algo:FMBS}, given matrix $\bphi$ as input, where subscript $t$ is abbreviated for simplicity.
\begin{table}
      \caption{Computational complexity comparisons of different sensor placment methods}
    \centering
    \label{tab:complexity}
    \begin{scriptsize}
    \begin{tabular}{|c|c|c|c|}
    \hline
    Method &{Convex}\cite{joshi2008sensor} / {SparSenSe}\cite{sparsense}&{FrameSense}\cite{linear-inverse}&{MNEP}\cite{jiang2016sensor}\\\hline
   Complexity &  $\cO(i_cN^3)$ / $\cO(i_sN^3)$&$\cO(N^3)$&$\cO(NMK^3)$\\\hline
   Method & {MPME}\cite{jiang2016sensor}&{fastMSE}\cite{jiang2019group}&{FMBS}\\\hline
  Complexity &  $\cO(NMK^2)$&$\cO(NMK^2)$&$\cO(NM^2)$\\\hline
    \end{tabular}
    \end{scriptsize}
\end{table}

{\subsection{Complexity Analysis}\label{subsec:complexity}}
Given sampling budget $M$ and the number of unsampled candidates $\cO(N)$, the complexity of the proposed method is $\cO(NM^2)$, because the complexity of vector-vector multiplications in each greedy search is at most $\cO(M)$.
{We call this method \textit{Fast MSE-based Sampling} (FMBS).}
We compare the proposed method with the following popular methods:
convex relaxation-based ({Convex}) \cite{joshi2008sensor}, sparse-aware sensor selection (SparSenSe) \cite{sparsense}, FrameSense  \cite{linear-inverse}, minimum nonzero eigenvalue pursuit ({MNEP}) \cite{jiang2016sensor}, maximal projection on minimum eigenspace ({MPME}) \cite{jiang2016sensor} and fast MSE pursuit-based ({fastMSE}) sampling \cite{jiang2019group}.
The computational complexities of those methods and the proposed FMBS are illustrated in Table.\;\ref{tab:complexity}, where some results are borrowed from \cite{jiang2016sensor} and \cite{jiang2019group}.
{The parameter $i_c$ is the iteration number of the interior-point method used in paper \cite{joshi2008sensor}, which is typically tens.}
Similarly, $i_s$ is the iteration number in the SparSenSe method.
Table.\;\ref{tab:complexity} shows that our proposed algorithm has the lowest theoretical complexity.
The empirical execution time and performance comparisons of these methods will be discussed in Section \ref{sec:experiment}. } 

\vspace{0.2cm}
\section{Experimentation}
\label{sec:experiment}
\begin{figure}
    \centering
            \subfigure{
    \includegraphics[width=115pt,height=115pt]{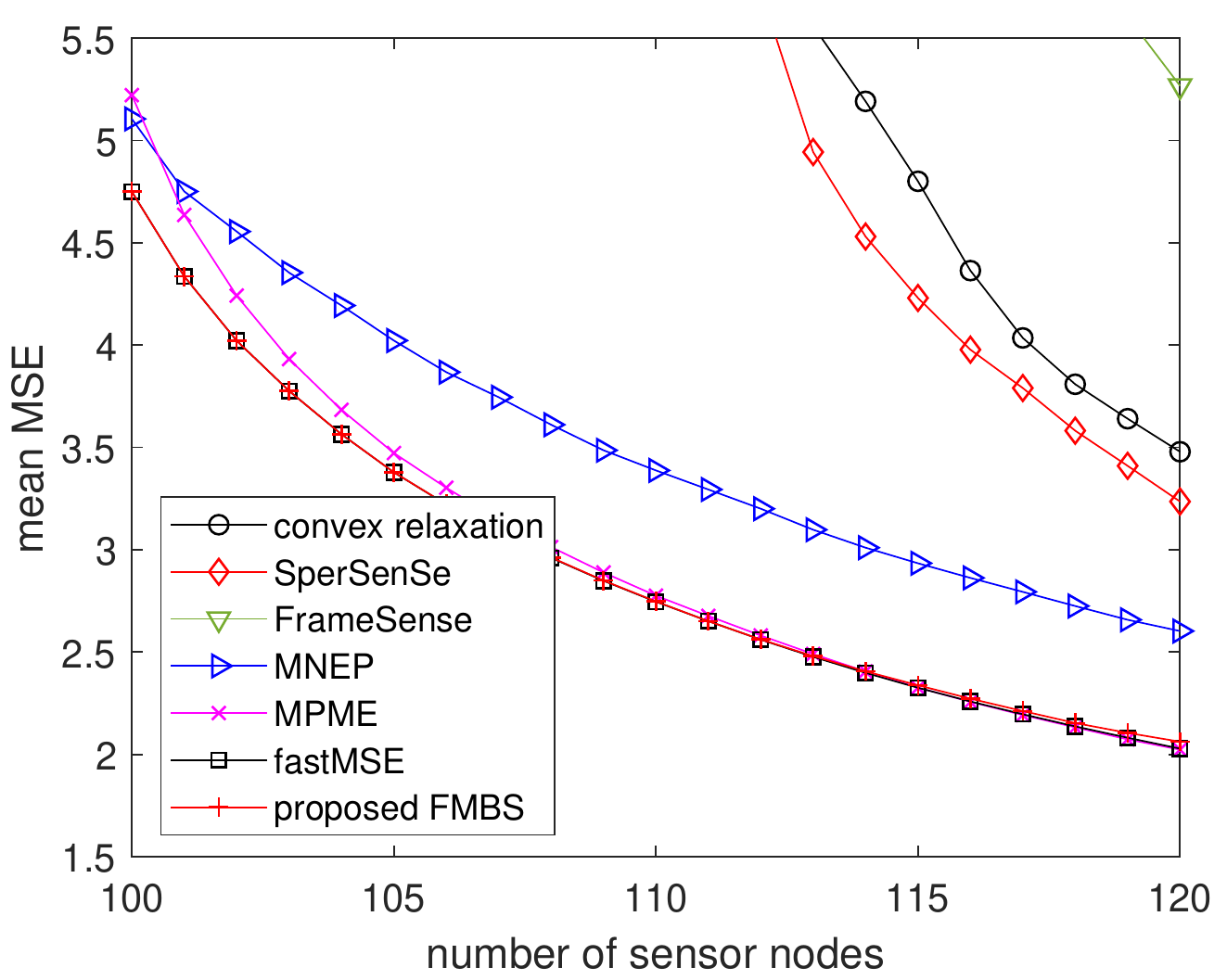}}
        \subfigure{
    \includegraphics[width=115pt,height=113pt]{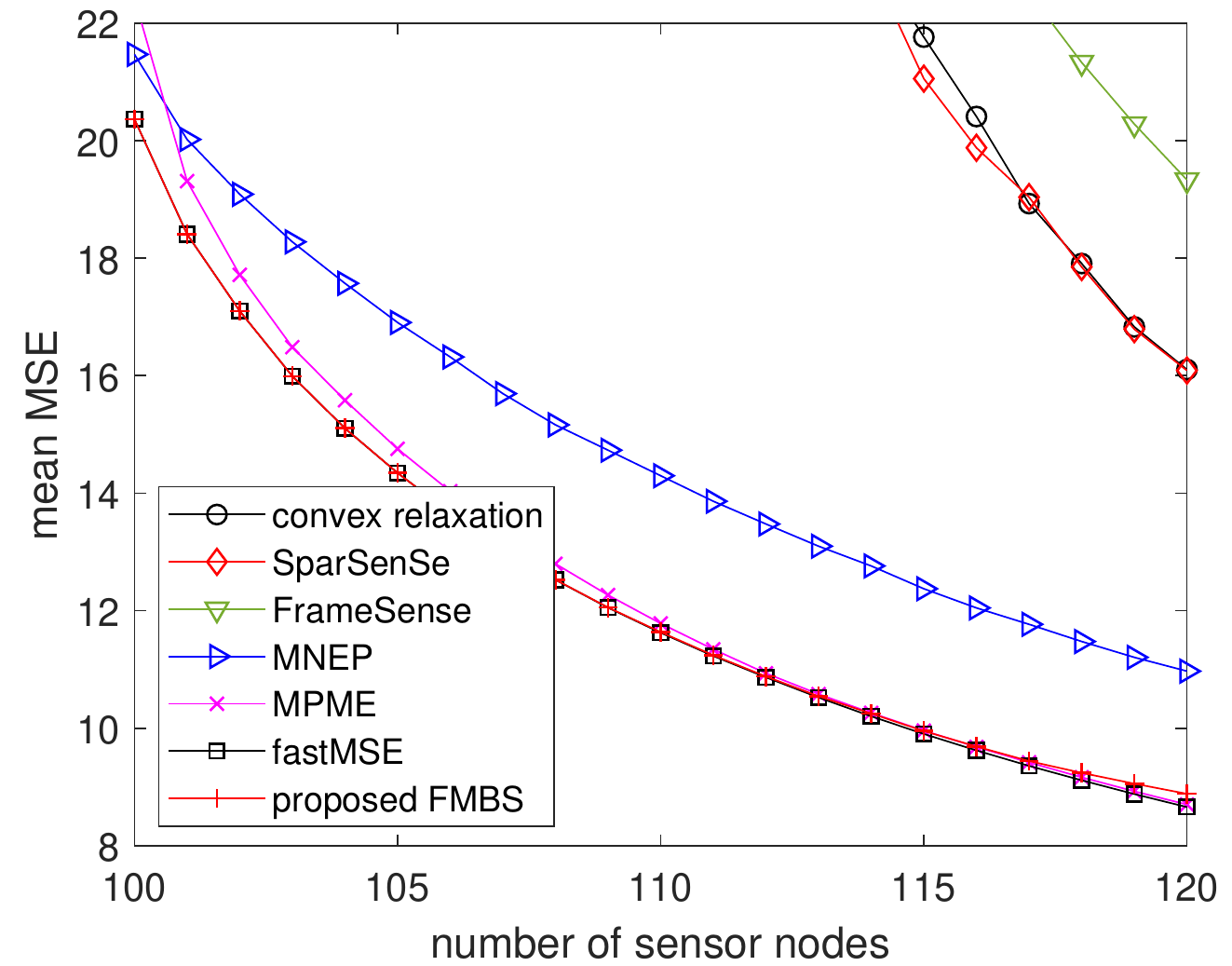}}
    \caption{Mean MSE of different sensor placement algorithms in terms of sensor budget, where the left measurement matrix $\bphi$ is generated from model 1 and the right one is generated from model 2.}
    \label{fig:mse}
\end{figure}
In this section, we will present extensive experimental results to evaluate the performance and efficacy of the proposed FMBS method. 
The measurement matrix $\bphi\in\bbR^{N\times K}$ in simulations can be generated from the following models \cite{jiang2016sensor}:

Model 1: $\bphi\in\bbR^{1000\times 100}$ is a Gaussian random matrix with
independent components $\phi_{ij}\sim\mathcal{N}(0,1)$, and the variance of
the noise $\sigma^2=1$;

Model 2:
$\bphi\in\bbR^{1000\times 100}$ is a Bernoulli random matrix with
independent components $\phi_{ij}\sim\mathcal{B}(1,1.5)$ with $\mathcal{B}$ representing the Binomial distribution, and the variance of the noise
$\sigma^2=1$;

\begin{table}[tp]
      \caption{Execution time for different sensor placement methods in terms of sensor budget $M$, when $N=1000$ and $K=100$.}
    \centering
    \label{tab:time}
    \begin{tabular}{cccccc}
    \hline
    ~&100&105&110&115&120\\\hline
    {Convex}\cite{joshi2008sensor}&33.45&32.04&31.70&31.38&31.65\\
    {FrameSense}\cite{linear-inverse}&0.217&0.214&0.206&0.211&0.207\\
    {MNEP}\cite{jiang2016sensor}&45.17&53.11&61.20&69.08&76.88\\
    {MPME}\cite{jiang2016sensor}&0.072&0.078&0.084&0.090&0.098\\
    {fastMSE}\cite{jiang2019group}&0.058&0.062&0.063&\textbf{0.067}&\textbf{0.069}\\
        {FMBS}&\textbf{0.056}&\textbf{0.059}&\textbf{0.062}&{0.068}&{0.072}\\\hline
    \end{tabular}
\end{table}

In our paper, we compute the averaged MSE value for each sampling algorithm via \cite{jiang2016sensor}:
\begin{equation}
  \overline{\textrm{MSE}}=\frac{1}{L}\sum_{i=1}^{L}\Tr\left[(\C_i\bphi_i)^\top(\C_i\bphi_i)\right]^{-1}
\end{equation}
where $L$ is the number of Monte-Carlo simulations and $\C_i$ is the sampling matrix obtained by different sampling methods at the $i$-th Monte-Carlo simulation.

We simulated the algorithms listed in Table.\;\ref{tab:complexity} for performance and execution time comparisons. The parameter $\mu$ was set to be $10^{-4}$ for the proposed method and fastMSE method. Fig.\;\ref{fig:mse} demonstrates the mean MSE performance of all simulated algorithms after 10 trails.
Fig.\;\ref{fig:mse} validates that the proposed FMBS achieved the same performance as fastMSE method, which was  better than all other competing popular  methods, especially when the sampling budget was small.
As stated after equation \eqref{eq:mse-shift}, fastMSE sampling  optimized the problems \eqref{eq:mse-shift} directly by greedy search, which has the same greedy solution as solving problem \eqref{eq:sampling-submatrix}. 

To evaluate experimental complexities of different methods, we recorded the execution time of different sampling methods under model 1, and illustrated the results in Table.\;\ref{tab:time}, where the best results are marked in bold. 
For this simulation, all experiments were performed on a laptop with Intel Core i7-8750H and 16GB of RAM on Windows 10 for counting time.
Table.\;\ref{tab:time} shows that the proposed {FMBS} cost the least sampling time when sample size is less than 110, and is comparable to fastMSE when $M=115,120$.

Further, we recorded the sampling time of MNEP, MPME, fastMSE and FMBS sampling methods on measurement matrix generated from model 1 with size $N=1000$, but with varied $K$. Convex and FrameSense sampling were not simulated since their performanc were not competitive. 
We set $M=K$ to fulfill the full column-rank requirement. The averaged execution time is  presented in Table.\;\ref{tab:timeK}, which indicates the proposed {FMBS} was the fastest one among simulated schemes. 

At last, for placing sensors on large-scale field, we performed relatively fast methods MPME and fastMSE as comparisons, where the size of target signal $N$ was vering from $2\times 10^3$ to $10^4$, and sampling budget $M$ is set to be $10\%N$. 
The dimension of parameter vector was $K=M$. 
Sampling time results of simulated methods were presented in Table.\;\ref{tab:timeN}, where the best results were marked in bold. 
It can be observed from Table.\;\ref{tab:timeN} that our proposed sensor placement method had the least sampling time in large-scale problems.

\vspace{-0.1cm}
\section{Conclusion}
\begin{table}[tp]
      \caption{Execution time for different sensor placement methods sensor networks with $N=1000$ and $K=M$.}
    \centering
    \label{tab:timeK}
    \begin{tabular}{cccccc}
    \hline
    ~&100&200&300&400&500\\\hline
    {MNEP}\cite{jiang2016sensor}&44.72&386.7&1246&2864&59036\\
    {MPME}\cite{jiang2016sensor}&0.074&0.609&1.964&4.448&8.468\\
    {fastMSE}\cite{jiang2019group}&0.057&0.541&1.503&2.942&5.260\\
    {FMBS}&\textbf{0.050}&\textbf{0.346}&\textbf{0.825}&\textbf{1.500}&\textbf{2.427}\\\hline
    \end{tabular}
\end{table}

\begin{table}[tp]
      \caption{Execution time for different sensor placement methods with varying $N$ and $M=K=10\%N$.}
    \centering
    \label{tab:timeN}
    \begin{tabular}{cccccc}
    \hline
    ~&2e3&4e3&6e3&8e3&1e4\\\hline
    {MPME}\cite{jiang2016sensor}&\textbf{0.916}&\textbf{8.91}&34.5&93.9&187\\
    {fastMSE}\cite{jiang2019group}&1.09&11.7&42.9&113&217\\
    {FMBS}&{1.02}&{9.45}&\textbf{33.3}&\textbf{82.0}&\textbf{156}\\\hline
    \end{tabular}
\end{table}
{In this paper, we proposed a fast sensor placement method for large-scale linear inverse problems. 
Specifically, assuming the field signal $\f$ is modelled by a linear equation $\f=\bphi{\g}$, it can be estimated from partial noisy samples via an unbiased least-square (LS) method, whose expected mean square error (MSE) is a function of measurement matrix $\bphi$ and sample set.
We formulated an approximate MSE problem, and then proved it is equivalent to a problem of a sample-dependent principle submatrix of $\bphi\bphi^\top$.
We proposed a fast greedy algorithm without matrix inverse computation. 
To further reduce complexity, we reused results in the previous greedy step for warm start to evaluate each candidate via few vector-vector multiplications.
Extensive experiments validated the performance and speed superiority of our proposed sensor placement method.}


\bibliographystyle{IEEEtran}
\bibliography{ref}
\end{document}